\newtheorem{theorem}{Theorem}
\newtheorem{lemma}{Lemma}
\newtheorem{definition}{Definition}
\newcommand{\C}{\mathbb{C}}
\newcommand{\R}{\mathbb{R}}
\newcommand{\ler}[1]{\left( #1 \right)}
\newcommand{\lesq}[1]{\left[ #1 \right]}
\newcommand{\lecurl}[1]{\left\{ #1 \right\}}
\newcommand{\hohc}{\cH \otimes \cH^*}
\newcommand{\Sig}{\mathbf{\sigma}}
\newcommand{\abs}[1]{\left| #1 \right|}
\newcommand{\inner}[2]{\left< #1,#2 \right>}
\newcommand{\norm}[1]{\left|\left|#1\right|\right|}
\newcommand{\Bra}[1]{\langle\langle #1 \right| \right|}
\newcommand{\Ket}[1]{\left| \left| #1 \rangle\rangle}
\newcommand{\isom}[1]{\mathrm{Isom} \ler{#1}}
\newcommand{\be}{\begin{equation}}
\newcommand{\ee}{\end{equation}}
\newcommand{\ba}{\begin{array}}
\newcommand{\ea}{\end{array}}
\newcommand{\fel}{\frac{1}{2}}
\newcommand{\cH}{\mathcal{H}}
\newcommand{\cK}{\mathcal{K}}
\newcommand{\cP}{\mathcal{P}}
\newcommand{\cD}{\mathcal{D}}
\newcommand{\cF}{\mathcal{F}}
\newcommand{\cG}{\mathcal{G}}
\newcommand{\cS}{\mathcal{S}}
\newcommand{\cC}{\mathcal{C}}
\newcommand{\cU}{\mathcal{U}}
\newcommand{\cW}{\mathcal{W}}
\newcommand{\cL}{\mathcal{L}}
\newcommand{\bb}{\mathbf{b}}
\newcommand{\bB}{\mathbf{B}}
\newcommand{\bO}{\mathbf{O}}
\newcommand{\bS}{\mathbf{S}}
\newcommand{\tr}{\mathrm{tr}}
\newcommand{\sut}{\mathbf{SU}(2)}
\title{Quantum Wasserstein isometries on the qubit state space}
\author[Gy\"orgy P\'al Geh\'er]{Gy\"orgy P\'al Geh\'er}
\address{Gy\"orgy P\'al Geh\'er, Riverlane, 59 St Andrew’s St, Cambridge CB2 3DD, United Kingdom
}
\email{gehergyuri@gmail.com}
\author[J\'ozsef Pitrik]{J\'ozsef Pitrik}
\address{J\'ozsef Pitrik, Wigner Research Centre for Physics\\ Budapest H-1525, Hungary\\ and Alfr\'ed R\'enyi Institute of Mathematics\\ Re\'altanoda u. 13-15.\\ Budapest H-1053\\ Hungary\\ and Department of Analysis, Institute of Mathematics \\Budapest University of Technology and Economics\\ M\H{u}egyetem rkp. 3. \\ Budapest H-1111\\ Hungary}
\email{pitrik.jozsef@renyi.hu}
\author[Tam\'as Titkos]{Tam\'as Titkos}
\address{Tam\'as Titkos, Alfr\'ed R\'enyi Institute of Mathematics\\ Re\'altanoda u. 13-15.\\
Budapest H-1053\\ Hungary\\ and BBS University of Applied Sciences\\ Alkotm\'any u. 9.\\
Budapest H-1054\\ Hungary}
\email{titkos.tamas@renyi.hu}
\author[D\'aniel Virosztek]{D\'aniel Virosztek}
\address{D\'aniel Virosztek, Alfr\'ed R\'enyi Institute of Mathematics\\ Re\'altanoda u. 13-15.\\Budapest H-1053\\ Hungary}
\email{virosztek.daniel@renyi.hu}
\date{}
\subjclass[2020]{Primary: 49Q22; 81Q99. Secondary: 54E40.}
\keywords{quantum optimal transport, isometries, quantum bits}
\thanks{Geh\'er was supported by the Leverhulme Trust Early Career Fellowship (ECF-2018-125), and also by the Hungarian National Research, Development and Innovation
Office (Grant no. K115383); Pitrik was supported by the “Frontline” Research Excellence Programme of the NKFIH
(Grant No. KKP133827); Titkos was supported by the Hungarian National Research, Development and Innovation Office - NKFIH (grant no. K115383); Virosztek was supported by the Momentum Program of the Hungarian Academy of Sciences (grant no. LP2021-15/2021)
and partially supported by the Hungarian National Research, Development and Innovation Office – NKFIH (grants no. K124152 and KH129601).}
\begin{document}

\maketitle

\begin{abstract}
We describe Wasserstein isometries of the quantum bit state space with respect to distinguished cost operators.
We derive a Wigner-type result for the cost operator involving all the Pauli matrices: in this case, the isometry group consists of unitary or anti-unitary conjugations. In the Bloch sphere model this means that the isometry group coincides with the classical symmetry group $\bO(3).$ On the other hand, for the cost generated by the qubit \lq\lq clock" and \lq\lq shift" operators, we discovered non-surjective and non-injective isometries as well, beyond the regular ones. This phenomenon mirrors certain surprising properties of the quantum Wasserstein distance.
\end{abstract}

\section{Introduction}

The relevance of quadratic Wasserstein spaces has grown dramatically in recent years due to their close connection with the theory of optimal transportation. Recall that if $(X,r)$ is complete and separable metric space, then the classical quadratic Wasserstein space $\mathcal{W}_2(X)$ is the collection of those probability measures on the Borel $\sigma$-algebra $\mathcal{B}$ that satisfy
$\int_X r(x,x_0)^2~\mathrm{d}\mu(x)<\infty$ for some $x_0\in X$,
endowed with the metric
\begin{equation*}
d_2(\mu, \nu):=\left(\inf_{\pi} \int_{X \times X} r^2(x,y)~\mathrm{d} \pi(x,y)\right)^{1/2}.
\end{equation*}
The infimum above is taken over the set of all couplings (or transport plans), that is, the set of all probability measures on $X \times X$ whose first marginal is $\mu$ and the second marginal is $\nu$. The Wasserstein distance quantifies the minimal effort required to morph $\mu$ into $\nu$ when the cost of transporting a unit mass from $x$ to $y$ is $r^2(x,y)$. Methods based on the theory of optimal transport and nice properties of Wasserstein spaces have achieved great success in several important fields of pure mathematics including probability theory \cite{bgl,Butkovsky}, theory of (stochastic) partial differential equations \cite{hairer,navier-stokes}, variational problems \cite{Figalli1,Figalli2} and geometry of metric spaces \cite{LV,VRS,Sturm}. 
Besides theoretical applications, the geometric characteristics of the Wasserstein metric (and other transport-related metrics) have given a momentum for research in many areas of applied sciences like image processing \cite{imageprocessing1,imageprocessing2}, medical imaging \cite{medicalimaging1,medicalimaging2}, inverse imaging problems \cite{inv}, and machine learning \cite{m1,MachineLearning2,PC,MachineLearning1,MachineLearning3,m2}.

It is a general phenomenon that concepts and notions that are well-established in the classical commutative world do not have a unique \lq\lq best" extension in the non-commutative world but there are many possible ways of generalization with pros and cons. This is the case concerning optimal transportation as well.
Non-commutative optimal transport is a flourishing research field these days with several different promising approaches such as that of Biane and Voiculescu \cite{Bi-Vo}, Carlen, Maas, Datta, and Rouzé  \cite{CM,CM2,CM3,DR1,DR2}, Golse, Mouhot, and Paul \cite{CGP, CGP2, Golse-Mouhot-Paul, Golse-Paul, Golse1,Golse2}, De Palma and Trevisan \cite{DP-M-T-L,DPT},  \.Zyczkowski and his collaborators  \cite{KZ1,KZ2,KZ3,Zs98, Zsi01}, and Duvenhage \cite{RD1,RD2}.
From our viewpoint, the most relevant approach is the one of De Palma and Trevisan involving quantum channels. We aim to explore the structure of isometries with respect to two distinguished quantum optimal transport distances. By isometry we mean a self-map of the state space preserving the quantum Wasserstein distance, without any further assumptions on surjectivity or injectivity. The existence of non-injective isometries is a surprising and fascinating phenomenon that cannot occur in a genuine metric setting. However, according to many of the approaches including that of De Palma and Trevisan \cite{DPT} which we follow, the quantum Wasserstein distance of states \emph{is not a genuine metric,} e.g., states may have positive distance from themselves.

When working with a structure which carries a distance-structure, a natural question arises: \emph{can we describe the structure of distance preserving maps?} As Hermann Weyl said in \cite{Weyl}: “Whenever you have to do with a structure–endowed entity $\Sigma$ try to determine its group of automorphisms, the group of those element–wise transformations which leave all structural relations undisturbed. You can expect to gain a deep insight into the constitution of $\Sigma$ in this way.” In recent years, there has been a lot of activity concerning such questions, see e.g. \cite{BZ,bk2016,DolinarKuzma,DolinarMolnar,FJ1,FJ2,Kuiper,LP,JMAA,TAMS, RIMS,HIL,TnSn,Kloeckner-2010,Levy,Monclair,Niemiec,S-R,titkoskissgraf,Virosztek}. We highlight three papers which deals with the structure of Wasserstein isometries over Euclidean spaces \cite{TAMS,HIL,Kloeckner-2010}. In \cite{Kloeckner-2010} Kloeckner described the isometry group of the quadratic Wasserstein space $\mathcal{W}_2(\mathbb{R}^n)$.
Later in \cite{TAMS} and \cite{HIL} we gave a complete characterisation of isometries of
$p$-Wasserstein spaces over real and separable Hilbert spaces for all parameters $1\leq p < \infty$.

In this paper, we consider the quantum case, namely we study Wasserstein isometries of the quantum bit state space with respect to two distinguished cost operators. In Theorem \ref{thm:wigner-symmetry} we derive a Wigner-type result for the cost operator involving all the Pauli matrices: in this case, the isometry group consists of unitary or anti-unitary conjugations. In the Bloch sphere model this means that the isometry group coincides with the classical symmetry group $\bO(3).$ In Theorem \ref{thm:clock-shift} we provide a lower and an upper bound for the isometry semigroup in the case when the cost is governed by the \lq\lq clock" and \lq\lq shift" operators. In order to determine the actual isometry semigroup, we performed some numerical test, see Section \ref{s: numerics}, which suggests that the isometry semigroup coincides with the lower bound in Theorem \ref{thm:clock-shift}. To the best of our knowledge, this work is the first one concerning quantum Wasserstein isometries.

\section{Basic notions, notation}
Throughout this paper $\cH$ will denote the Hilbert space $\C^2.$ The symbol $\mathcal{L}(\cH)$ stands for the set of all linear operators on $\cH$, and $A\leq B$ means that $B-A$ is positive semidefinite. The set of quantum states will be denoted by $\mathcal{S}(\cH)$, that is, $\mathcal{S}(\cH)=\{\rho\in\mathcal{L}(H)\,|\,\rho\geq0,\,\tr_{\cH}\rho=1\}$. A state $\rho\in\mathcal{S}(\cH)$ is called a \emph{pure state} if it is a rank one projection, i.e., there exist a unit vector $\psi\in\cH$ such that $\rho=|\psi\rangle\langle\psi|$. The set of pure states will be denoted by $\cP_1\ler{\cH}$. For a $\Pi\in\mathcal{L}(\cH \otimes \cH^*)$ the partial trace $\tr_{\cH^*}\Pi$ is defined by $\tr_{\cH}((\tr_{\cH^*}\Pi)A)=\tr_{\cH\otimes \cH^*}(\Pi(A\otimes I_{\cH^*}))$ for all $A\in\mathcal{L(\cH)}$, and similarly, the partial trace $\tr_{\cH}\Pi$ is defined by $\tr_{\cH^*}((\tr_{\cH}\Pi)B^T)=\tr_{\cH\otimes\cH^*}(\Pi(I_{\cH}\otimes B^T))$ for all $B\in\mathcal{L}(\cH)$. Using the canonical linear isomorphism between $\mathcal{L}(\cH)$ and $\cH\otimes\cH^*$, for an operator $A\in\mathcal{L}(\cH)$ the symbol $|| A \rangle\rangle$ denotes the corresponding vector in $\cH\otimes\cH^*$.

Elements of $\mathcal{S}(\cH)$ can be represented by vectors using the Bloch representation. The \emph{Bloch vector} $\bb_{\rho}$ of a state $\rho \in \mathcal{S}(\cH)$ is defined by
$$
\R^3 \ni \bb_{\rho}:=\ler{\tr_{\cH}\ler{\rho \sigma_j} }_{j=1}^3
$$
where the $\sigma_j$'s are the Pauli operators
\be \label{eq:pauli}
\sigma_1=\lesq{\ba{cc} 0 & 1  \\  1 & 0  \ea}\qquad \sigma_2=\lesq{\ba{cc} 0 & -i  \\  i & 0  \ea}\qquad \sigma_3=\lesq{\ba{cc} 1 & 0  \\  0 & -1  \ea}.
\ee
The positivity condition $\rho \geq 0$ ensures that $\norm{\bb_{\rho}}_{\R^3} \leq 1,$ and hence we will denote the Bloch ball by $\mathbf{B}^3$. The symbols $\mathbf{U}(n)$, $\mathbf{SU}(n)$, $\mathbf{O}(n)$, and $\mathbf{SO}(n)$ denote the unitary, special unitary, orthogonal, and special orthogonal groups, respectively.
Following the convention of De Palma and Trevisan \cite{DPT}, the set of all couplings of the quantum states $\rho, \omega \in \cS\ler{\cH}$ is denoted by $\cC\ler{\rho, \omega},$ and is given by

\be \label{eq:q-coup-def}
\cC\ler{\rho, \omega}=\lecurl{\Pi \in \cS\ler{\cH \otimes \cH^*} \, \middle| \, \tr_{\cH^*} \Pi=\omega, \,  \tr_{\cH} \Pi=\rho^T}.
\ee
We remark that $\cC\ler{\rho,\omega}$ is never empty, because the trivial coupling $\omega\otimes\rho^T$ belongs to $\cC\ler{\rho,\omega}$.
According to the convention \eqref{eq:q-coup-def}, we consider \emph{quadratic cost operator}s of the form
\be \label{eq:cost-op-def}
C=\sum_{j=1}^K \ler{A_j \otimes I_{\cH^*} - I_{\cH} \otimes A_j^T}^2,
\ee
where the $A_j$'s are self-adjoint operators on $\cH.$ The reason of this choice is that the $A_j$'s represent observable physical quantities and for a state $\Pi$ of the composite system, the quantity $\tr_{\hohc}(\Pi\, C)$ is the expected quadratic difference between outcomes on the first and second subsystems.
\par
The corresponding \emph{quadratic quantum Wasserstein distance} $D_C\ler{\rho, \omega}$ of $\rho$ and $\omega$ is defined by
\be \label{eq:qw-dist-def}
D_C\ler{\rho, \omega}=\left(\inf_{\Pi \in \cC\ler{\rho, \omega}} \tr_{\hohc} \ler{\Pi \, C}\right)^{1/2}.
\ee
The main goal of this paper is to describe the structure of isometries, that is, quantum Wasserstein distance preserving maps, of the qubit state space. As the quantum version of the Wasserstein distance is not a genuine metric, e.g., states may have a positive distance from themselves, we precisely state below what we mean by isometry. 
\begin{definition}[Quantum Wasserstein isometry] \label{def:QW-isom}
A map $\Phi: \, \cS\ler{\cH} \rightarrow \cS\ler{\cH}$ is called a \emph{quantum Wasserstein isometry} with respect to the cost operator $C$ if
\be \label{eq:QW-isom}
D_C\ler{\Phi\ler{\rho}, \Phi\ler{\omega}}=D_C\ler{\rho, \omega}
\ee
for all $\rho, \omega \in \cS\ler{\cH}.$
\end{definition}

Note that in Definition \ref{def:QW-isom} there is no a priori assumption on the surjectivity or injectivity of $\Phi.$ In fact, a \emph{typical} quantum Wasserstein isometry described in Theorem \ref{thm:clock-shift} is neither surjective nor injective! See \eqref{eq:weird-map} for very simple examples of isometries of this kind. The fact that states may have a positive distance from themselves opens the door for non-injective isometries which cannot exist in a genuine metric setting.

\section{ Symmetric cost operator: a Wigner-type result}\label{sec:Wtype1}

Let us consider the cost operator which is symmetric in the sense that it involves all the Pauli operators $\sigma_1,\sigma_2,\sigma_3$ --- see \eqref{eq:pauli}. The symmetric cost is defined by
\be \label{eq:symm-cost-def}
C_{sym}:=\sum_{j=1}^3 \ler{\sigma_j \otimes I_{\cH^*} - I_{\cH} \otimes \sigma_j^T}^2
=\lesq{\ba{cccc} 4 & 0 & 0 & -4 \\ 0 & 8 & 0 & 0 \\ 0 & 0 & 8 & 0 \\ -4 & 0 & 0 & 4 \ea},
\ee
and the corresponding quantum Wasserstein distance is denoted by $D_{sym}\ler{\cdot,\cdot}.$
An important feature of the symmetric cost is that the induced distance $D_{sym}$ is invariant under unitary or anti-unitary conjugations (in short: Wigner symmetries).

\begin{lemma} \label{lem:wig-sym}
For any unitary or anti-unitary operator $U$ and for any $\rho, \omega \in \cS\ler{\cH}$ we have
\be\label{eq:unit-sim-isom}
D_{sym}\ler{U \rho U^*, U \omega U^*}=D_{sym}\ler{\rho, \omega}.
\ee
\end{lemma}

\begin{proof}
We note first that in view of \eqref{eq:q-coup-def}, for any $U \in \mathbf{U}(2)$ and $\rho,\omega\in\cS\ler{\cH}$ we have
\be \label{eq:coup-unit-inv}
\cC\ler{U \rho U^*, U \omega U^*}=\lecurl{\cU \, \Pi \,\cU^* \, \middle| \, \Pi \in \cC \ler{\rho, \omega} }
\ee
where
\be \label{eq:cU-def}
\cU:=U \otimes \ler{U^T}^*
\ee
is a unitary on $\cH \otimes \cH^*.$
\par
The relation $\lecurl{\cU \, \Pi \,\cU^* \, \middle| \, \Pi \in \cC \ler{\rho, \omega} } \subseteq \cC\ler{U \rho U^*, U \omega U^*}$ is justified as follows.
Let $\Pi=\sum_{r=1}^R A_r \otimes B_r^T \, \ler{A_r,B_r \in \cL\ler{\cH}}$ be a decomposition of a coupling $\Pi \in \cC \ler{\rho, \omega}.$
Then
$$
\cU \, \Pi \, \cU^*=\ler{U \otimes \ler{U^T}^*}\ler{\sum_{r=1}^R A_r \otimes B_r^T}\ler{U^* \otimes U^T}
=\sum_{r=1}^R U A_r U^* \otimes \ler{U B_r U^*}^T
$$
and hence
$$
\tr_{\cH^*} \ler{\cU \, \Pi \, \cU^*}= U \ler{\sum_{r=1}^R \ler{\tr_{\cH^*} B_r^T} A_r} U^* =U \omega U^*
$$
and very similarly, $\tr_{\cH} \ler{\cU \, \Pi \,\cU^*}=\ler{U \rho U^*}^T,$ and unitary conjugation preserves positivity. On the other hand, if $\Sigma \in \cC\ler{U \rho U^*, U \omega U^*},$ then $\cU^* \Sigma \cU \in \cC\ler{\rho, \omega},$ where $\cU^*=U^* \otimes U^T,$ so the equation \eqref{eq:coup-unit-inv} is justified.   
\par
The next step is to show that the symmetric cost defined by \eqref{eq:symm-cost-def} is \emph{invariant} under conjugation by $\cU,$ more precisely,
\be \label{eq:cost-invariance}
\ler{U \otimes \ler{U^T}^*} C_{sym} \ler{U^* \otimes U^T}=C_{sym},
\ee
for any $U \in \mathbf{U}(2).$ As $\ler{\gamma U} \otimes \ler{\ler{\gamma U}^T}^*=U \otimes \ler{U^T}^*$ for any complex number $\gamma$ of modulus one, it is sufficient to prove \eqref{eq:cost-invariance} for $U \in \sut.$
\par
Let $U=\lesq{\ba{cc} \alpha & -\overline\beta \\ \beta & \overline\alpha \ea} \in \sut,$ that is, $\alpha, \beta \in \C$ and $\abs{\alpha}^2+\abs{\beta}^2=1.$ Then
\be \label{eq:cU-form}
U \otimes \ler{U^T}^*
=\lesq{\ba{cccc}
\abs{\alpha}^2 & -\alpha \beta & -\overline{\alpha \beta} & \abs{\beta}^2\\
\alpha \overline{\beta} & \alpha^2 & -\overline{\beta}^2 & -\alpha \overline{\beta} \\
\beta \overline{\alpha} & -\beta^2 & \overline{\alpha}^2 & -\beta \overline{\alpha} \\
\abs{\beta}^2 & \alpha \beta & \overline{\alpha \beta} & \abs{\alpha}^2 \ea}.
\ee
The spectral decomposition of the cost operator is
$$
C_{sym}= \lesq{\ba{cccc} 4 & 0 & 0 & -4 \\ 0 & 8 & 0 & 0 \\ 0 & 0 & 8 & 0 \\ -4 & 0 & 0 & 4 \ea}
=0 \cdot \lesq{\ba{cccc} 1 & 0 & 0 & 1 \\ 0 & 0 & 0 & 0 \\ 0 & 0 & 0 & 0 \\ 1 & 0 & 0 & 1 \ea}
+
$$
$$
+4 \cdot \ler{\lesq{\ba{cccc} 1 & 0 & 0 & -1 \\ 0 & 0 & 0 & 0 \\ 0 & 0 & 0 & 0 \\ -1 & 0 & 0 & 1 \ea}
+\lesq{\ba{cccc} 0 & 0 & 0 & 0 \\ 0 & 1 & 1 & 0 \\ 0 & 1 & 1 & 0 \\ 0 & 0 & 0 & 0 \ea}+\lesq{\ba{cccc} 0 & 0 & 0 & 0 \\ 0 & 1 & -1 & 0 \\ 0 & -1 & 1 & 0 \\ 0 & 0 & 0 & 0 \ea}},
$$
that is, 
$$
C_{sym}=0 \cdot \Ket{I_\cH}\Bra{I_\cH} + 4 \cdot \ler{\Ket{\sigma_3}\Bra{\sigma_3}+\Ket{\sigma_1}\Bra{\sigma_1}+\Ket{\sigma_2}\Bra{\sigma_2}}.
$$
Note that $\Ket{I_\cH}\Bra{I_\cH}$ is an eigenprojection of the normal operator $U \otimes \ler{U^T}^*,$ see \eqref{eq:cU-form}, and hence $U \otimes \ler{U^T}^*$ commutes with the cost $C_{sym},$ which is a rank three projection up to a multiplicative constant.
So we deduced that $\left[C_{sym}\, , \, U \otimes \ler{U^T}^*\right]=0$ for any $U \in \sut,$ which is equivalent to the desired equation \eqref{eq:cost-invariance}.
Therefore,
$$
D_{sym}^2\ler{U \rho U^*, U \omega U^*}=\inf \lecurl{\tr_{\hohc} \ler{C_{sym} \Sigma} \, \middle| \, \Sigma \in \cC\ler{U \rho U^*, U \omega U^*}}
$$
$$
=\inf \lecurl{\tr_{\hohc} \ler{C_{sym} \cU \, \Pi \, \cU^*} \, \middle| \, \Pi \in \cC\ler{\rho, \omega}}
$$
$$
=\inf \lecurl{\tr_{\hohc} \ler{ \cU^* \, C_{sym} \cU \, \Pi } \, \middle| \, \Pi \in \cC\ler{\rho, \omega}}
$$
\be \label{eq:d-sym-inv}
=\inf \lecurl{\tr_{\hohc} \ler{C_{sym} \, \Pi } \, \middle| \, \Pi \in \cC\ler{\rho, \omega}}=D_{sym}^2\ler{\rho,  \omega},
\ee
so \eqref{eq:unit-sim-isom} is proved for unitaries.
\par
As for an anti-unitary $U,$ let us note that $U A U^*=V \overline{A} V^{*}$ for some unitary $V \in \mathbf{U}(2).$ However, $\overline{\sigma_j} \otimes \overline{\sigma_j}=\sigma_j \otimes \sigma_j$ for every $j \in \lecurl{1,2,3},$ hence $\overline{C_{sym}}=C_{sym},$ and
$\cC\ler{\overline{\rho} ,\overline{\omega}}=\lecurl{\overline{\Pi} \, \middle| \, \Pi \in \cC \ler{\rho, \omega}}.$
Therefore,
$$
D_{sym}^2\ler{\overline{\rho} ,\overline{\omega}}=\inf \lecurl{\tr_{\hohc} \ler{\overline{C_{sym}} \, \overline{\Pi}} \, \middle| \, \Pi \in \cC\ler{\rho, \omega}}=D_{sym}^2\ler{\rho ,\omega}
$$
for any $\rho, \omega \in \cS\ler{\cH},$ and \eqref{eq:unit-sim-isom} is proved for anti-unitaries.
\end{proof}

Now we know that Wigner symmetries are quantum Wasserstein isometries with respect to the distance induced by the symmetric transport cost operator \eqref{eq:symm-cost-def}. The main result of this section is that \emph{all the quantum Wasserstein isometries} with respect to this cost are Wigner symmetries.

\begin{theorem} \label{thm:wigner-symmetry}
Let $\Phi: \, \cS\ler{\cH} \rightarrow \cS\ler{\cH}$ be a quantum Wasserstein isometry with respect to the cost operator $C_{sym}$ given in \eqref{eq:symm-cost-def}. That is, assume that
$$
D_{sym}\ler{\Phi\ler{\rho}, \Phi\ler{\omega}}=D_{sym}\ler{\rho, \omega} \qquad \ler{\rho, \omega \in \cS\ler{\cH}}.
$$
Then there exist a unitary or anti-unitary operator $U$ acting on $\cH=\C^2$ such that
\be \label{eq:qw-pres-form}
\Phi\ler{\rho}=U \rho U^* \qquad \ler{\rho \in \cS\ler{\cH}}.
\ee
Conversely, any map of the form \eqref{eq:qw-pres-form} is a quantum Wasserstein isometry with respect to $C_{sym}$.
\par
In other words, the isometry group of the quantum Wasserstein space defined by the cost operator $C_{sym}$ coincides with the orthogonal group $\mathbf{O}(3)$ by the Bloch representation. 
\end{theorem}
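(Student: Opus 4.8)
The plan is to first compute $D_{sym}$ explicitly on the Bloch ball, then extract enough geometric rigidity from the resulting distance function to pin down $\Phi$. The starting point is to exploit the structure of $C_{sym}$: in Bloch coordinates $\rho\leftrightarrow \bb\in\bB^3$, one expects $D_{sym}\ler{\rho,\omega}^2$ to be a function of the Euclidean data $\norm{\bb}$, $\norm{\bb'}$ and $\inner{\bb}{\bb'}$ (or equivalently of $\tr(\rho\omega)$, $\tr\rho^2$, $\tr\omega^2$), by the $\sut$-covariance of the whole construction. Concretely I would solve the semidefinite optimisation in \eqref{eq:qw-dist-def}: the coupling $\Pi\in\cC\ler{\rho,\omega}$ lives in $\cS\ler{\cH\otimes\cH^*}$, a $4\times4$ density matrix with prescribed marginals $\omega$ and $\rho^T$, and $C_{sym}$ as written in \eqref{eq:symm-cost-def} is block-diagonal-ish with eigenvalues $0, 8, 8$ and $8$ on the relevant subspace — actually $C_{sym}$ has eigenvalue $0$ on the span of $\Ket{I}$-type vector and $8$ on its orthocomplement. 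So $\tr(\Pi C_{sym}) = 8\ler{1 - \inner{\text{(symmetric part)}}{\Pi}}$ up to constants, and minimising reduces to maximising the overlap of $\Pi$ with a fixed rank-one projection subject to the marginal constraints. This is a small, explicit optimisation whose value I expect to come out as something like $D_{sym}\ler{\rho,\omega}^2 = a + b\,\tr\rho^2 + b\,\tr\omega^2 - c\,F(\rho,\omega)$ where $F$ involves the fidelity or $\tr\sqrt{\sqrt\rho\,\omega\sqrt\rho}$; in the qubit case all such quantities are elementary functions of $\norm{\bb},\norm{\bb'},\inner{\bb}{\bb'}$.

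Once $D_{sym}$ is known in closed form, the next step is the rigidity argument. The self-distance $D_{sym}\ler{\rho,\rho}$ will be a strictly monotone function of $\norm{\bb}$ (it should vanish only on the pure states, i.e. the boundary sphere, and be maximal at the maximally mixed state, or vice versa), so $\Phi$ preserves the "radius" $\norm{\bb}$ of every point. In particular $\Phi$ maps $\cP_1\ler{\cH}$ (the Bloch sphere $S^2$) into itself. Restricted to pure states $D_{sym}$ becomes a genuine metric that is a strictly monotone function of the geodesic/chordal distance on $S^2$, hence a strictly monotone function of $\inner{\bb}{\bb'}$; therefore $\Phi|_{S^2}$ preserves inner products of unit Bloch vectors, and classical Wigner-type reasoning (or the fundamental theorem of projective geometry, or simply: an inner-product-preserving self-map of $S^2$ extends to an element of $\bO(3)$) shows $\Phi|_{S^2}$ is the restriction of some $O\in\bO(3)$. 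Composing $\Phi$ with the inverse of the corresponding unitary/anti-unitary conjugation, we may assume $\Phi$ fixes every pure state.

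It then remains to show a map fixing all pure states and preserving $D_{sym}$ fixes everything. Here I would use that an arbitrary $\rho$ with Bloch vector $\bb$ has prescribed $D_{sym}$-distances to all pure states $\ket\psi\bra\psi$, and that these distances (being strictly monotone in $\inner{\bb}{\bb'}$ for fixed radii, together with the already-fixed radius $\norm{\bb}=\norm{\Phi(\bb)}$) determine $\bb$ uniquely: two points of the same radius with the same inner product against every unit vector coincide. Finally the converse direction — that every $U\rho U^*$ with $U$ unitary or anti-unitary is a $D_{sym}$-isometry — is immediate from the fact that such conjugations are exactly the maps implementing $\bO(3)$ on the Bloch ball, and $D_{sym}$ was shown to depend only on $\bO(3)$-invariant data; one checks directly that $C_{sym}$ is invariant under $\ler{U\otimes \bar U}C_{sym}\ler{U\otimes\bar U}^*$ and that conjugation by $U$ induces a bijection of coupling sets $\cC\ler{\rho,\omega}\to\cC\ler{U\rho U^*,U\omega U^*}$ preserving $\tr(\Pi C_{sym})$.

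The main obstacle I anticipate is the first step: carrying out the semidefinite minimisation in \eqref{eq:qw-dist-def} cleanly enough to get a usable closed form, and in particular verifying that the optimal coupling is not the trivial product coupling $\omega\otimes\rho^T$ in general (De Palma–Trevisan's distance is notorious for non-classical optima, and indeed for $\rho=\omega$ one does \emph{not} get $0$). Getting the precise dependence on $\inner{\bb}{\bb'}$ right — and confirming it is \emph{strictly} monotone, with no flat directions that would admit extra non-isometric... extra non-rigid behaviour — is where the real work lies; the rest is standard Wigner-type bookkeeping.
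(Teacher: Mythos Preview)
Your overall architecture matches the paper's: identify pure states metrically, show $\Phi|_{\cP_1(\cH)}$ acts as some $O\in\bO(3)$ on the Bloch sphere, then use distances to pure states to pin down $\Phi$ everywhere, and handle the converse by checking that $C_{sym}$ is invariant under $U\otimes(U^T)^*$-conjugation. The difference is that you propose to first compute $D_{sym}$ in closed form for \emph{all} pairs of states by solving the SDP, and you correctly flag this as the hard step. The paper avoids it entirely via one observation you are missing: whenever either $\rho$ or $\omega$ is pure, the product $\omega\otimes\rho^T$ is the \emph{only} element of $\cC(\rho,\omega)$ (the quantum analogue of the fact that a Dirac mass has a unique coupling with anything), so there is no minimisation to perform and one reads off directly
\[
D_{sym}^2(\rho,\omega)=\tr\ler{(\omega\otimes\rho^T)\,C_{sym}}=6-2\inner{\bb_\rho}{\bb_\omega}.
\]
This single formula does all the work: it shows the maximal value $8$ is attained exactly on orthogonal pure pairs (hence isometries send pure states to pure states, with no need for your self-distance argument), it gives the pure--pure distance as $4+\norm{\bb_\rho-\bb_\omega}^2$, and it shows that the distances from an arbitrary $\rho$ to the three states $\tfrac12(I_\cH+\sigma_j)$ recover the three Bloch coordinates of $\rho$. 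The mixed--mixed distance is never needed, so your ``main obstacle'' evaporates.

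Two small corrections. First, $D_{sym}$ is not a genuine metric even on pure states---the self-distance there is $4$, not $0$---but since $D_{sym}^2=4+\norm{\bb_\rho-\bb_\omega}^2$ on $\cP_1(\cH)$, your monotonicity-in-chordal-distance argument is unaffected. Second, your self-distance route to radius preservation does work (the De Palma--Trevisan purification formula gives $D_{sym}^2(\rho,\rho)=4\bigl(1-\sqrt{1-\norm{\bb_\rho}^2}\bigr)$, strictly increasing in $\norm{\bb_\rho}$, so your ``or vice versa'' hedge was the right call), but the paper's maximal-distance argument is more economical and needs no extra input.
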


\begin{proof}
Let 
\be \label{eq:rho-def}
\rho=\fel\ler{I_{\cH}+x \sigma_1+y \sigma_2 + z \sigma_3}=\fel \lesq{\ba{cc} 1+z & x - y i \\ x + y i & 1-z \ea}
\ee
and
\be \label{eq:omega-def}
\omega=\fel\ler{I_{\cH}+u \sigma_1+v \sigma_2 + w \sigma_3}=\fel \lesq{\ba{cc} 1+w & u - v i \\ u + v i & 1-w \ea}
\ee
be states on $\cH.$
The cost of the trivial coupling of $\rho$ and $\omega$ is
$$
\tr_{\hohc} \ler{\ler{\omega \otimes \rho^T} C_{sym}}=\tr_{\hohc} \ler{\ler{\omega \otimes \rho^T} \ler{6 I_{\hohc} -2 \sum_{j=1}^3 \sigma_j \otimes \sigma_j^T}}
$$
\be \label{eq:triv-coupl-cost}
=6-2 \sum_{j=1}^3 \tr_{\cH} \ler{\omega \sigma_j} \cdot \tr_{\cH^*} \ler{\rho^T \sigma_j^T} = 6- 2 \ler{xu+yv+zw}=6-2 \inner{\bb_\rho}{\bb_\omega}
\ee
where $\bb_\rho=\ler{x,y,z} \in \bB^3$ and $\bb_\omega=\ler{u,v,w} \in \bB^3$ are the Bloch vectors of $\rho$ and $\omega,$ respectively.
\par
We see from \eqref{eq:triv-coupl-cost} that $D_{sym}^2\ler{\rho, \omega} \leq 8$ for any $\rho$ and $\omega,$ and $D_{sym}^2\ler{\rho, \omega}=8$ if and only if $\bb_\omega=-\bb_\rho$ and $\norm{\bb_\rho}=1.$ This latter property amounts to $\rho$ and $\omega$ being orthogonal pure states. Therefore,
\be \label{eq:max-dist}
\max_{\rho, \omega \in \cS\ler{\cH}} D_{sym}^2\ler{\rho, \omega}=8,
\ee
and the maximum is attained if and only if $\rho$ and $\omega$ are orthogonal pure states.
Consequently, \emph{any quantum Wasserstein isometry with respect to $C_{sym}$ maps pure states to pure states.}
\par
Now, we exploit the fact that if either $\rho$ or $\omega$ is a pure state, then the trivial coupling is the only quantum coupling, that is, $\cC\ler{\rho, \omega}=\lecurl{\omega \otimes \rho^T}.$ This is the quantum analogue of the classical fact that if either $\mu$ or $\nu$ is a Dirac mass, then the only classical coupling of $\mu$ and $\nu$ is $\mu \times \nu.$ So, the quantum Wasserstein distance of pure states $\rho, \omega \in \cP_1\ler{\cH}$ can be expressed in terms of their Bloch vectors as follows:
\be \label{eq:pure-st-dist}
D_{sym}^2\ler{\rho, \omega}=6-2\inner{\bb_\rho}{\bb_\omega}=4+\norm{\bb_\rho-\bb_\omega}^2 \qquad \ler{\rho, \omega \in \cP_1\ler{\cH}}.
\ee
This means that the squared Wasserstein distance of pure states is an affine image of the Euclidean distance of the corresponding Bloch vectors. In the Bloch representation, pure states correspond to the unit sphere $\bS^2 \subset \R^3.$ Therefore, if $\Phi: \, \cS\ler{\cH} \rightarrow \cS\ler{\cH}$ is a Wasserstein isometry for $C_{sym},$ then it acts on pure states like an isometry of $\bS^2.$ Namely, there exists an $O \in \bO(3)$ such that 
\be \label{eq:isom-O3}
\bb_{\Phi\ler{\rho}}=O\ler{\bb_\rho} \qquad \ler{\rho \in \cP_1\ler{\cH}}.
\ee
Observe now that if a Wasserstein isometry $\Phi: \, \cS\ler{\cH} \rightarrow \cS\ler{\cH}$ leaves the distinguished pure states $\omega_j:=\fel\ler{I_{\cH}+\sigma_j}, \, j \in \lecurl{1,2,3}$ invariant, then
$$
\Phi=\mathrm{Id}_{\cS\ler{\cH}}.
$$
Indeed, if $\rho \in \cS\ler{\cH}$ is given by \eqref{eq:rho-def}, then by \eqref{eq:triv-coupl-cost}, we have 
\be \label{eq:coordinates}
D_{sym}^2\ler{\rho, \omega_1}=6-2x, \, D_{sym}^2\ler{\rho, \omega_2}=6-2y, \, D_{sym}^2\ler{\rho, \omega_3}=6-2z.
\ee
Therefore, if $\Phi\ler{\omega_j}=\omega_j$ for every $j \in \lecurl{1,2,3},$ then the preserver equation \eqref{eq:QW-isom} tells us that all Bloch coordinates of $\rho$ remain the same, that is, $\Phi\ler{\rho}=\rho.$
Consequently, any Wasserstein isometry $\Phi: \, \cS\ler{\cH} \rightarrow \cS\ler{\cH}$ acts as an isometry of $\bB^3,$ namely, there exists an $O \in \bO(3)$ such that 
\be \label{eq:isom-full-O3}
\bb_{\Phi\ler{\rho}}=O\ler{\bb_\rho} \qquad \ler{\rho \in \cS\ler{\cH}}.
\ee
It is clear that unitary conjugations on $\cS\ler{\cH}$ induce orientation preserving orthogonal transformations on the Bloch sphere $\bB^3.$ But the contrary is also true by \cite[Prop. VII.5.7]{Simon}, namely that for any $O \in \mathbf{SO}(3)$ there exists a $U \in \sut$ such that the action $\rho \mapsto U \rho U^*$ is described by $O$ in the Bloch sphere model.
\par
Moreover, as $\rho \mapsto \overline{\rho}$ (taking the element-wise complex conjugate in the standard basis) is an orientation reversing isometry in the Bloch sphere model --- namely, it is the reflection to the \lq\lq $x-z$ plane" as $\overline{\sigma_1}=\sigma_1, \, \overline{\sigma_3}=\sigma_3,$ and $\overline{\sigma_2}=-\sigma_2$ ---, for every $O \in \mathbf{O}(3)$ there is a unitary or anti-unitary $U$ such that $\rho \mapsto U \rho U^*$ is described by $O$ in the Bloch sphere model. Unitary conjugations correspond to orientation-preserving isometries of $\bB^3,$ while anti-unitary conjugations correspond to orientation-reversing isometries. 
\par
So \eqref{eq:isom-full-O3} tells us that all isometries \emph{must be} unitary or anti-unitary conjugations.
The converse statement has been proved in Lemma \ref{lem:wig-sym} and hence the proof is done.
\end{proof}

\section{Non-injective and non-surjective isometries}

We turn to the case when the cost operator involves the qubit \lq\lq clock" and \lq\lq shift" operators which are intimately related to the finite dimensional approximations of the position and momentum operators in quantum mechanics \cite{satan-tequila, singh-carroll}.
\par
As the qubit \lq\lq clock" operator is $\sigma_3$ and the \lq\lq shift" is $\sigma_1,$ let us define the corresponding cost operator $C_{xz}$ by
\be \label{eq:real-cost-def}
C_{xz}:=\sum_{j=1,3} \ler{\sigma_j \otimes I_{\cH^*} - I_{\cH}  \otimes \sigma_j^T}^2=4 I -2 \sum_{j=1,3} \sigma_j \otimes \sigma_j^T
=\lesq{\ba{cccc} 2 & 0 & 0 & -2 \\ 0 & 6 & -2 & 0 \\ 0 & -2 & 6 & 0 \\ -2 & 0 & 0 & 2 \ea}.
\ee
We denote the corresponding quantum Wasserstein distance by $D_{xz}.$ We need some additional definitions to state the main result of this section precisely.
\par
Let $\cG$ denote the group of transformations of $\cS\ler{\cH}$ generated by the maps $\rho \mapsto e^{it\sigma_2} \rho e^{-it\sigma_2} \, \ler{t \in \R}$ and $\rho \mapsto \sigma_3 \overline{\rho} \sigma_3^{*}.$ That is,
\be \label{eq:G-def}
\cG:=\left<\lecurl{\rho \mapsto e^{it\sigma_2} \rho e^{-it\sigma_2} \, \middle| \, t \in \R} \cup \lecurl{\rho \mapsto \sigma_3 \overline{\rho} \sigma_3^{*}}\right>.
\ee
Note that in the Bloch sphere model these transformations are rotations around the \lq\lq $y$ axis" and the reflection to the \lq\lq $y-z$ plane" and hence $\cG \cong \bO(2).$
\par
Let $\cK$ denote the group of self-maps of $\cS\ler{\cH}$ generated by taking the complex conjugate in the standard basis, that is, by the map $\rho \mapsto \overline{\rho}.$ Note that every element of $\cG$ commutes with every element of $\cK,$ and $\cK$ is isomorphic to $\cC_2,$ the cyclic group of order $2.$
\par
Let $\cS^\R\ler{\cH}$ denote the set of all real symmetric states
$$
\cS^\R\ler{\cH}=\lecurl{\fel\ler{I_{\cH}+x \sigma_1+z \sigma_3}\, \middle| \, x^2+z^2 \leq 1},
$$
and let $\cP_1^\R\ler{\cH}$ denote the set of all real symmetric pure states, that is, $$\cP_1^\R\ler{\cH}=\cP_1\ler{\cH} \cap \cS^\R\ler{\cH}.$$
\par
For a set $\cD \subseteq \cS\ler{\cH},$ we define
$$
\cF_{\lecurl{-1,1}}^{\ler{\cD}}:=\lecurl{\xi: \cS\ler{\cH}\to{ \cS\ler{\cH}} \, \middle| \, \xi\ler{\rho} \in \lecurl{\rho, \overline{\rho}} \text{ if } \rho \in \cD \text{ and } \xi\ler{\rho}=\rho \text{ if } \rho\notin\cD}.
$$
We will consider the special cases where $\cD=\cP_1\ler{\cH} \setminus \cP_1^\R\ler{\cH}$ or $\cD=\cS\ler{\cH} \setminus \cS^\R\ler{\cH}.$
Note that in these special cases $\cF_{\lecurl{-1,1}}^{\ler{\cD}}$ endowed with the composition operation is a unital semigroup. Moreover, the restriction of $\psi$ to $\cD$ is a bijection of $\cD$ for every $\psi \in \cG \times \cK.$
\par
Let us take a closer look at the algebraic structure of transformations of $\cS\ler{\cH}$ generated by the elements of $\cG \times \cK$ and $\cF_{\lecurl{-1,1}}^{\ler{\cP_1\ler{\cH} \setminus \cP_1^\R\ler{\cH}}}.$ For every $\psi \in \cG \times \cK,$ the map
$$
\xi \mapsto \psi^{-1} \circ \xi \circ \psi \qquad \ler{\xi \in \cF_{\lecurl{-1,1}}^{\ler{\cP_1\ler{\cH} \setminus \cP_1^\R\ler{\cH}}}}
$$
is an automorphism of $\cF_{\lecurl{-1,1}}^{\ler{\cP_1\ler{\cH} \setminus \cP_1^\R\ler{\cH}}}.$
Therefore, any composition $\Phi$ of elements of $\cG \times \cK$ and $\cF_{\lecurl{-1,1}}^{\ler{\cP_1\ler{\cH} \setminus \cP_1^\R\ler{\cH}}}$ can be written in a simple form: $\Phi=\psi \circ \xi$ for some $\psi \in \cG \times \cK$ and $\xi \in \cF_{\lecurl{-1,1}}^{\ler{\cP_1\ler{\cH} \setminus \cP_1^\R\ler{\cH}}}.$
\par
So the algebraic structure of transformations of $\cS\ler{\cH}$ generated by the elements of $\cG \times \cK$ and $\cF_{\lecurl{-1,1}}^{\ler{\cP_1\ler{\cH} \setminus \cP_1^\R\ler{\cH}}}$ is the \emph{semidirect product}
$$
\ler{\cG \times \cK} \ltimes_{\varphi_1} \cF_{\lecurl{-1,1}}^{\ler{\cP_1\ler{\cH} \setminus \cP_1^\R\ler{\cH}}} 
$$
where $\ltimes_{\varphi_1}$ is defined by the standard action
\be \label{eq:std-1}
\ler{\varphi_1(\psi)}(\xi)=\psi^{-1} \circ \xi \circ \psi.
\ee
That is, the action of an element $\ler{\psi, \xi} \in \ler{\cG \times \cK} \ltimes_{\varphi_1} \cF_{\lecurl{-1,1}}^{\ler{\cP_1\ler{\cH} \setminus \cP_1^\R\ler{\cH}}}$ on the state space $\cS\ler{\cH}$ is given by $\ler{\psi, \xi}\ler{\rho}=\ler{\psi \circ \xi}\ler{\rho},$ and the product of $\ler{\psi', \xi'}$ and $\ler{\psi, \xi}$ is
$$
\ler{\psi', \xi'}*\ler{\psi, \xi}=\ler{\psi' \circ \psi, \,   \ler{\varphi_1(\psi)}\ler{\xi'} \circ \xi}=\ler{\psi' \circ \psi, \,   \psi^{-1} \circ \xi' \circ \psi \circ \xi},
$$
and hence for $\rho \in \cS\ler{\cH},$
$$
\ler{\ler{\psi', \xi'}*\ler{\psi, \xi}}\ler{\rho}= \ler{\psi' \circ \psi \circ  \psi^{-1} \circ \xi' \circ \psi \circ \xi} \ler{\rho}=\ler{\psi' \circ \xi' \circ \psi \circ \xi} \ler{\rho}.
$$
Very similarly, the elements of $\cG$ give rise to automorphisms of $\cF_{\lecurl{-1,1}}^{\ler{\cS\ler{\cH} \setminus \cS^\R\ler{\cH}}}$ by conjugation, and the algebraic structure generated by the elements of $\cG$ and $\cF_{\lecurl{-1,1}}^{\ler{\cS\ler{\cH} \setminus \cS^\R\ler{\cH}}}$ is the semidirect product
$$
\cG  \ltimes_{\varphi_2} \cF_{\lecurl{-1,1}}^{\ler{\cS\ler{\cH} \setminus \cS^\R\ler{\cH}}} 
$$
where $\ltimes_{\varphi_2}$ is defined by the standard action
\be \label{eq:std-2}
\ler{\varphi_2(\psi)}(\xi)=\psi^{-1} \circ \xi \circ \psi.
\ee

Let $\isom{\cW_2^{(xz)}\ler{\cS\ler{\cH}}}$ denote the semigroup of quantum Wasserstein isometries of the state space $\cS\ler{\cH}$ with respect to the cost operator $C_{xz}$ given in \eqref{eq:real-cost-def}.

\begin{theorem} \label{thm:clock-shift}
If $\psi \in \cG \times \cK \cong \bO(2) \times \cC_2$ and $\xi \in \cF_{\lecurl{-1,1}}^{\ler{\cP_1\ler{\cH} \setminus \cP_1^\R\ler{\cH}}},$ then the map $\Phi=\psi \circ \xi$ belongs to the semigroup $\isom{\cW_2^{(xz)}\ler{\cS\ler{\cH}}}.$
On the other hand, if $\Phi \in \isom{\cW_2^{(xz)}\ler{\cS\ler{\cH}}},$ then there exists a unique $\psi \in \cG  \cong \bO(2)$ and a unique $\xi \in \cF_{\lecurl{-1,1}}^{\ler{\cS\ler{\cH} \setminus \cS^\R\ler{\cH}}}$ such that $\Phi=\psi \circ \xi.$
In other words,
\be \label{eq:bounds}
  \ler{\bO(2) \times \cC_2} \ltimes_{\varphi_1} \cF_{\lecurl{-1,1}}^{\ler{\cP_1\ler{\cH} \setminus \cP_1^\R\ler{\cH}}} \subseteq \isom{\cW_2^{(xz)}\ler{\cS\ler{\cH}}} \subseteq
 \bO(2) \ltimes_{\varphi_2}  \cF_{\lecurl{-1,1}}^{\ler{\cS\ler{\cH} \setminus \cS^\R\ler{\cH}}}
\ee
where the semidirect products $\ltimes_{\varphi_1}$ and $\ltimes_{\varphi_2}$ are defined by the standard actions
\eqref{eq:std-1} and \eqref{eq:std-2}, respectively.
\end{theorem}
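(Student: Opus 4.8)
My plan is to check that $\cG\times\cK$ consists of genuine (bijective, distance-preserving) isometries and that every $\xi\in\cF_{\lecurl{-1,1}}^{\ler{\cP_1\ler{\cH}\setminus\cP_1^\R\ler{\cH}}}$ preserves $D_{xz}$; the claimed containment then follows since a composition of isometries is an isometry and the identification of the generated semigroup with the semidirect product has already been made. For the generators of $\cG$ I would rerun the commutation argument from the proof of Theorem~\ref{thm:wigner-symmetry}: writing $C_{xz}=4I_{\hohc}-2\ler{\sigma_1\otimes\sigma_1^T+\sigma_3\otimes\sigma_3^T}$, a rotation $U=e^{it\sigma_2}$ about the \lq\lq $y$ axis" carries $\lecurl{\sigma_1,\sigma_3}$ into its own real linear span, whence $\ler{U\otimes\ler{U^T}^*}C_{xz}\ler{U^*\otimes U^T}=C_{xz}$; the same computation works for conjugation by $\sigma_3$; and complex conjugation preserves $D_{xz}$ because $C_{xz}$ is a real matrix, exactly as in the anti-unitary part of the proof of Theorem~\ref{thm:wigner-symmetry}. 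For $\xi$, the decisive remark is that if one of the two arguments is pure then the trivial coupling is the only coupling, so $D_{xz}^2\ler{\rho,\omega}=4-2\ler{x_\rho x_\omega+z_\rho z_\omega}$ depends only on the $\ler{x,z}$-projections of the Bloch vectors and is unchanged when $\rho$ or $\omega$ is replaced by its complex conjugate (which only flips the $y$-coordinate); since $\xi$ touches only non-real pure states, it is an isometry.

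\textbf{The upper bound: reduction to the $y$-coordinate.} From $D_{xz}^2\ler{\rho,\omega}\le 4-2\ler{x_\rho x_\omega+z_\rho z_\omega}\le 4+2\norm{(x_\rho,z_\rho)}\norm{(x_\omega,z_\omega)}\le 6$ one reads off that $\max_{\rho,\omega}D_{xz}^2\ler{\rho,\omega}=6$, with equality precisely when $\rho,\omega\in\cP_1^\R\ler{\cH}$ and $\bb_\omega=-\bb_\rho$. Hence any isometry $\Phi$ maps $\cP_1^\R\ler{\cH}$ into itself, and because $D_{xz}^2\ler{\rho,\omega}=4-2\inner{(x_\rho,z_\rho)}{(x_\omega,z_\omega)}$ on pure states, $\Phi$ is injective on $\cP_1^\R\ler{\cH}$ and acts there as the restriction of an isometry of the unit circle of the \lq\lq $x-z$ plane", i.e.\ of some $O\in\bO(2)$; composing with the element of $\cG\cong\bO(2)$ realising $O^{-1}$, I may assume $\Phi$ fixes $\cP_1^\R\ler{\cH}$ pointwise. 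Then, for an arbitrary $\rho$ and a real pure $\omega$, triviality of the coupling together with $\Phi\ler{\omega}=\omega$ gives $4-2\ler{x_{\Phi\ler{\rho}}x_\omega+z_{\Phi\ler{\rho}}z_\omega}=D_{xz}^2\ler{\Phi\ler{\rho},\omega}=D_{xz}^2\ler{\rho,\omega}=4-2\ler{x_\rho x_\omega+z_\rho z_\omega}$; letting $(x_\omega,z_\omega)$ run over the unit circle forces $(x_{\Phi\ler{\rho}},z_{\Phi\ler{\rho}})=(x_\rho,z_\rho)$. So after this normalisation $\Phi$ preserves the $\ler{x,z}$-projection of every Bloch vector, and only the $y$-coordinate remains to be pinned down.

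\textbf{The $y$-coordinate, and the main obstacle.} Since every element of $\cG$ and complex conjugation are genuine isometries, the number $D_{xz}^2\ler{\rho,\rho}$ is constant on the $\cG\times\cK$-orbit of $\rho$, which is exactly the set of states whose Bloch vector has $\ler{x,z}$-part of a prescribed norm $r_\rho$ and a prescribed value of $y^2$; equivalently $D_{xz}^2\ler{\rho,\rho}=g\ler{r_\rho^2,\norm{\bb_\rho}^2}$ for a suitable function $g$. The plan is to evaluate the semidefinite program $D_{xz}^2\ler{\rho,\rho}=\inf_{\Pi\in\cC\ler{\rho,\rho}}\tr_{\hohc}\ler{\Pi C_{xz}}$ and to establish that $g\ler{r^2,\cdot}$ is strictly monotone---one expects it strictly increasing, interpolating between $D_{xz}^2\ler{\rho,\rho}=4-2r^2$ for pure $\rho$ and $D_{xz}^2\ler{\rho,\rho}=0$ at the maximally mixed state (the latter witnessed by a maximally entangled coupling). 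Granting this, $\Phi$ preserves both $r_\rho$ and $D_{xz}^2\ler{\rho,\rho}$, hence $\norm{\bb_{\Phi\ler{\rho}}}=\norm{\bb_\rho}$; together with the previous paragraph this leaves $\bb_{\Phi\ler{\rho}}\in\lecurl{(x_\rho,y_\rho,z_\rho),(x_\rho,-y_\rho,z_\rho)}$, i.e.\ $\Phi\ler{\rho}\in\lecurl{\rho,\overline{\rho}}$, with $\Phi\ler{\rho}=\rho$ whenever $\rho\in\cS^\R\ler{\cH}$. Thus the normalised $\Phi$ lies in $\cF_{\lecurl{-1,1}}^{\ler{\cS\ler{\cH}\setminus\cS^\R\ler{\cH}}}$, and undoing the normalisation gives $\Phi=\psi\circ\xi$ with $\psi\in\cG$ and $\xi\in\cF_{\lecurl{-1,1}}^{\ler{\cS\ler{\cH}\setminus\cS^\R\ler{\cH}}}$; uniqueness is immediate, since any such $\xi$ fixes $\cP_1^\R\ler{\cH}$, so $\psi$ is forced on $\cP_1^\R\ler{\cH}$ and hence everywhere ($\cG\cong\bO(2)$ acts faithfully there), after which $\xi=\psi^{-1}\circ\Phi$. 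The one genuinely nontrivial ingredient is precisely the monotonicity of the self-distance: one has to understand the SDP over $\cC\ler{\rho,\rho}$ well enough to see that, among states sharing an $\ler{x,z}$-projection, $D_{xz}^2\ler{\rho,\rho}$ strictly increases with $\norm{\bb_\rho}$; everything else rests only on the trivial-coupling formula and on symmetries already in hand.
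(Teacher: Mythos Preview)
Your overall architecture matches the paper's proof essentially step for step: verify the generators of $\cG\times\cK$ leave $C_{xz}$ invariant, use the ``only the trivial coupling if one state is pure'' fact to handle $\xi\in\cF_{\{-1,1\}}^{(\cP_1(\cH)\setminus\cP_1^\R(\cH))}$, then for the upper bound identify the maximal-distance pairs, extract an $\bO(2)$ action on $\cP_1^\R(\cH)$, normalise, freeze the $(x,z)$-coordinates, and finally use the self-distance to control $y$.

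The one point you flag as unproved --- strict monotonicity of $D_{xz}^2(\rho,\rho)$ in $y^2$ for fixed $(x,z)$ --- is exactly where the paper brings in the missing idea. You propose to attack the SDP $\inf_{\Pi\in\cC(\rho,\rho)}\tr_{\hohc}(\Pi C_{xz})$ directly; the paper instead invokes \cite[Corollary~1]{DPT}, which says that for any quadratic cost the optimal self-coupling is the canonical purification, so that
\[
D_{xz}^2(\rho,\rho)=\tr_{\hohc}\bigl(C_{xz}\,\Ket{\sqrt{\rho}}\Bra{\sqrt{\rho}}\bigr).
\]
Writing $\sqrt{\rho}$ via the spectral decomposition of $\rho$ and using the spectral form $C_{xz}=2\Ket{\sigma_1}\Bra{\sigma_1}+4\Ket{\sigma_2}\Bra{\sigma_2}+2\Ket{\sigma_3}\Bra{\sigma_3}$, one gets the closed form
\[
D_{xz}^2(\rho,\rho)=2\bigl(1-\sqrt{1-\norm{\bb_\rho}^2}\bigr)\Bigl(1+\tfrac{y^2}{\norm{\bb_\rho}^2}\Bigr),
\]
which for fixed $x,z$ is a product of two strictly increasing functions of $y^2$, hence strictly increasing. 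This closes the gap you identified; without the De~Palma--Trevisan self-distance formula one would indeed have to solve the SDP by hand, and that is the step your outline leaves open. (Incidentally, your ``interpolation'' remark is slightly off: at fixed $r=\norm{(x,z)}$, the $y=0$ endpoint gives $2(1-\sqrt{1-r^2})$, not $0$ unless $r=0$; this does not affect the argument.)
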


\begin{proof}

Let us start with proving the lower bound in \eqref{eq:bounds}.
First we show that the orthogonal group $\bO(2)$, identified with $\cG$ defined in \eqref{eq:G-def}, is a subgroup of the Wasserstein isometry semigroup. That is, with $U(t):=e^{it\sigma_2} \, \ler{t \in \R},$ the unitary similarity transformation
\be \label{eq:unit-sim-real}
\rho \mapsto U(t) \rho U(t)^{*} \qquad \ler{\rho \in \cS\ler{\cH}}
\ee
is a quantum Wasserstein isometry for every $t \in \R,$ and the anti-unitary similarity transformation
\be \label{eq:anti-unit-sim-real}
\rho \mapsto \sigma_3 \overline{\rho} \sigma_3^{*} \qquad \ler{\rho \in \cS\ler{\cH}}
\ee
is also a quantum Wasserstein isometry.
\par
Let $\cU(t):=U(t) \otimes \ler{U(t)^T}^*.$ Note that $U(t)=\lesq{\ba{cc}\cos{t} & \sin{t}\\ -\sin{t} & \cos{t} \ea}.$ We show that the \lq\lq clock and shift" cost defined in \eqref{eq:real-cost-def} is invariant under conjugation by $\cU(t),$ that is,
\be \label{eq:cost-invariance-real}
\cU(t) \, C_{xz} \, \cU(t)^*=C_{xz}
\ee
for all $t \in \R.$
Indeed, 
\begin{equation*}
U(t) \sigma_3 U(t)^*=\lesq{\ba{cc}\cos{2t} & -\sin{2t}\\ -\sin{2t} & -\cos{2t} \ea}~~~\mbox{ and }~~~ U(t) \sigma_1 U(t)^*=\lesq{\ba{cc}\sin{2t} & \cos{2t}\\ \cos{2t} & -\sin{2t} \ea},
\end{equation*}
and hence
$$
\cU(t) \, C_{xz} \, \cU(t)^*
=\cU(t) \ler{4 I_{\hohc} -2 \sum_{j=1,3} \sigma_j \otimes \sigma_j^T} \cU(t)^*
$$
$$
=4 I_{\hohc} -2 \sum_{j=1,3} U(t)\sigma_j U(t)^* \otimes \ler{U(t)^T}^*\sigma_j^T U(t)^T 
$$
$$
=4 I_{\hohc}- 
2 \left(\lesq{\ba{cccc}\cos^2{2t} & -\cos{2t}\sin{2t} & -\cos{2t}\sin{2t} & \sin^2{2t}\\
-\cos{2t}\sin{2t} & -\cos^2{2t} & \sin^2{2t} &\cos{2t}\sin{2t}\\
-\cos{2t}\sin{2t} & \sin^2{2t} & -\cos^2{2t} & \cos{2t}\sin{2t} \\
\sin^2{2t} & \cos{2t}\sin{2t} & \cos{2t}\sin{2t} & \cos^2{2t}
\ea}+ \right.
$$
$$
\left.
+
\lesq{\ba{cccc}\sin^2{2t} & \cos{2t}\sin{2t} & \cos{2t}\sin{2t} & \cos^2{2t} \\
\cos{2t}\sin{2t} & -\sin^2{2t} & \cos^2{2t} & -\cos{2t}\sin{2t}\\
\cos{2t}\sin{2t} & \cos^2{2t} & -\sin^2{2t} & -\cos{2t}\sin{2t} \\
\cos^2{2t} & -\cos{2t}\sin{2t} & -\cos{2t}\sin{2t} & \sin^2{2t}
\ea} \right)
$$
$$
=
\lesq{\ba{cccc} 4-2 & 0 & 0 & -2 \\
0 & 4+2 & -2 & 0\\
0 & -2 & 4+2 & 0 \\
-2 & 0 & 0 & 4-2
\ea}=C_{xz}.
$$

Now, using \eqref{eq:coup-unit-inv}, an argument very similar to \eqref{eq:d-sym-inv} shows that this unitary invariance of the cost \eqref{eq:cost-invariance-real} implies
$$
D_{xz}\ler{U(t) \rho U(t)^*, U(t) \omega U(t)^*}=D_{xz}\ler{\rho, \omega} \qquad \ler{\rho, \omega \in \cS\ler{\cH}}.
$$
The next step is considering the anti-unitary similarity transformation \eqref{eq:anti-unit-sim-real}.
Note that if $\Pi \in \cC \ler{\rho, \omega}$ and $\Pi=\sum_{r=1}^R A_r \otimes B_r^T \, \ler{A_r,B_r \in \cL\ler{\cH}},$ then
$$
\ler{\sigma_3 \otimes \ler{\sigma_3^T}^*} \, \overline{\sum_{r=1}^R A_r \otimes B_r^T} \,\ler{\sigma_3^* \otimes \sigma_3^T}
=\sum_{r=1}^R \sigma_3 \overline{A_r} \sigma_3^* \otimes \ler{\sigma_3 \overline{B_r} \sigma_3^*}^T
$$
and hence 
\be \label{eq:au-tr}
\lecurl{\ler{\sigma_3 \otimes \ler{\sigma_3^T}^*} \, \overline{\Pi} \,\ler{\sigma_3^* \otimes \sigma_3^T} \, \middle| \, \Pi \in \cC \ler{\rho, \omega} }
\subseteq
\cC\ler{\sigma_3 \overline{\rho} \sigma_3^{*}, \sigma_3 \overline{\omega} \sigma_3^{*}}.
\ee
However, the inclusion in the other direction can be shown very similarly: if $\Sigma \in \cC\ler{\sigma_3 \overline{\rho} \sigma_3^{*}, \sigma_3 \overline{\omega} \sigma_3^{*}},$ then 
$$
\ler{\sigma_3^* \otimes \sigma_3^T} \, \overline{\Sigma} \,\ler{\sigma_3 \otimes \ler{\sigma_3^T}^*} \in \cC\ler{\rho, \omega},
$$
and hence \eqref{eq:au-tr} holds with equality. With this in hand,
$$
D_{xz}^2\ler{\sigma_3 \overline{\rho} \sigma_3^*, \sigma_3 \overline{\omega} \sigma_3^*}
=\inf \lecurl{\tr_{\hohc} \ler{C_{xz} \Sigma} \, \middle| \, \Sigma \in \cC\ler{\sigma_3 \overline{\rho} \sigma_3^{*}, \sigma_3 \overline{\omega} \sigma_3^{*}}}
$$
$$
=\inf \lecurl{\tr_{\hohc} \ler{C_{xz} \ler{\sigma_3 \otimes \ler{\sigma_3^T}^*} \, \overline{\Pi} \,\ler{\sigma_3^* \otimes \sigma_3^T}} \, \middle| \, \Pi \in \cC\ler{\rho, \omega}}
$$
$$
=\inf \lecurl{\tr_{\hohc} \ler{ \ler{\sigma_3^* \otimes \sigma_3^T} \, C_{xz} \ler{\sigma_3 \otimes \ler{\sigma_3^T}^*} \, \overline{\Pi} } \, \middle| \, \Pi \in \cC\ler{\rho, \omega}}
$$

\be \label{eq:d-xz-inv}
=\inf \lecurl{\tr_{\hohc} \ler{C_{xz} \, \Pi } \, \middle| \, \Pi \in \cC\ler{\rho, \omega}}=D_{xz}^2\ler{\rho,  \omega},
\ee
where we used that cost \eqref{eq:real-cost-def} is invariant under the transformation \eqref{eq:anti-unit-sim-real} in the sense that
$$
\overline{\ler{\sigma_3^* \otimes \sigma_3^T} \, C_{xz} \ler{\sigma_3 \otimes \ler{\sigma_3^T}^*}}
$$
$$
=\sum_{j \in \{1,3\}}\ler{\sigma_3 \overline{\sigma_j} \sigma_3^{*} \otimes \sigma_3 \overline{I_{\cH^*}} \sigma_3^{*} - \sigma_3 \overline{I_{\cH}} \sigma_3^{*} \otimes \sigma_3 \overline{\sigma_j} \sigma_3^{*}}^2
$$
$$
=\sum_{j=1,3} \ler{\sigma_j \otimes I_{\cH^*} - I_{\cH}  \otimes \sigma_j^T}^2=C_{xz}.
$$
So this anti-unitary similarity transformation \eqref{eq:anti-unit-sim-real} is also a Wasserstein isometry.
\par
To see that the map $\rho \mapsto \overline{\rho}$ is also isometric, we only have to note that
$$
D_{xz}^2\ler{\overline{\rho}, \overline{\omega}}
=\inf \lecurl{\tr_{\hohc} \ler{C_{xz} \Sigma} \, \middle| \, \Sigma \in \cC\ler{\overline{\rho}, \overline{\omega}}}
$$
$$
=\inf \lecurl{\tr_{\hohc} \ler{C_{xz} \overline{\Pi}} \, \middle| \, \Pi \in \cC\ler{\rho, \omega}}
=\inf \lecurl{\tr_{\hohc} \ler{\overline{C_{xz}} \Pi} \, \middle| \, \Pi \in \cC\ler{\rho, \omega}}
$$
$$
=\inf \lecurl{\tr_{\hohc} \ler{C_{xz} \Pi} \, \middle| \, \Pi \in \cC\ler{\rho, \omega}}
=D_{xz}^2\ler{\rho, \omega}.
$$
The last step in proving the lower bound in \eqref{eq:bounds} is showing the isometric property of maps of the form 
\begin{align} \label{eq:weird-map}
    \xi\ler{\rho}:= \left\{ \begin{matrix}
    \rho & \rho \in \cS\ler{\cH} \setminus \cP_1\ler{\cH} \\
    \fel\ler{\rho+\overline{\rho}}+\epsilon_2\ler{\rho} \fel\ler{\rho-\overline{\rho}}  & \rho \in \cP_1\ler{\cH}
    \end{matrix}\right.
\end{align} 
where $\epsilon_2\colon \cP_1\ler{\cH}\to \{-1,1\}$ is an \emph{arbitrary} function.
\par
Let $\rho$ and $\omega$ be defined as in \eqref{eq:rho-def} and \eqref{eq:omega-def}. The cost of the trivial coupling of $\rho$ and $\omega$ is
$$
\tr_{\hohc} \ler{\ler{\omega \otimes \rho^T} C_{xz}}=\tr_{\hohc} \ler{\ler{\omega \otimes \rho^T} \ler{4 I_{\hohc} -2 \sum_{j=1,3} \sigma_j \otimes \sigma_j^T}}
$$
\be \label{eq:triv-coupl-cost-real}
=4-2 \sum_{j=1,3} \tr_{\cH} \ler{\omega \sigma_j} \cdot \tr_{\cH^*} \ler{\rho^T \sigma_j^T} = 4- 2 \ler{xu+zw}.
\ee
Recall that $\cC\ler{\rho, \omega}=\lecurl{\omega \otimes \rho^T}$ if either $\rho$ or $\omega$ is a pure state. So the isometric property of \eqref{eq:weird-map} holds because by \eqref{eq:triv-coupl-cost-real}, for any state $\rho$ and any \emph{pure} state $\omega$ given as in \eqref{eq:rho-def} and \eqref{eq:omega-def} we have
\begin{align*}
  D_{xz}^2(\xi(\omega),\xi(\rho)) =  D_{xz}^2(\omega,\rho) = 4 - 2(xu+zw)
\end{align*}
where we used that $\xi$ defined in \eqref{eq:weird-map} does not change the first and third elements of the Bloch vectors of $\rho$ and $\omega.$ So if at least one of the states involved is pure, the distance does not depend on the second elements of their Bloch vectors at all.
\par
So any composition of the transformations in $\cG \times \cK \cong \bO(2) \times \cC_2$ and maps from $\cF_{\lecurl{-1,1}}^{\ler{\cP_1\ler{\cH} \setminus \cP_1^\R\ler{\cH}}},$ that is, maps of the form \eqref{eq:weird-map}, are Wasserstein isometries with respect to $C_{xz}.$
In other words, 
$$
\ler{\bO(2) \times \cC_2} \ltimes_{\varphi_1} \cF_{\lecurl{-1,1}}^{\ler{\cP_1\ler{\cH} \setminus \cP_1^\R\ler{\cH}}} \subseteq \isom{\cW_2^{(xz)}\ler{\cS\ler{\cH}}},
$$
so the lower bound in \eqref{eq:bounds} is proved.
\par
To prove the upper bound in \eqref{eq:bounds}, let us recall the transport costs of trivial couplings given in \eqref{eq:triv-coupl-cost-real}. An immediate consequence of \eqref{eq:triv-coupl-cost-real} is that
\be \label{eq:max-dist-real}
\max_{\rho, \omega \in \cS\ler{\cH}} D_{xz}^2\ler{\rho, \omega}=6
\ee
and the maximum is attained if and only if $\rho$ and $\omega$ are orthogonal pure states with real elements, that is,
$$
\rho=\fel\ler{I_{\cH}+\cos\alpha \sigma_1+\sin \alpha \sigma_3} \text{ and } \omega=\fel\ler{I_{\cH}-\cos\alpha \sigma_1-\sin \alpha \sigma_3}
$$
for some $\alpha \in \R.$
\par
Consequently, any quantum Wasserstein isometry with respect to $C_{xz}$ maps real symmetric pure states to real symmetric pure states.

\par
Another easy consequence of \eqref{eq:triv-coupl-cost-real} is that the quantum Wasserstein distance of real symmetric pure states $\rho, \omega \in \cP_1^{\R}\ler{\cH}$ can be expressed in terms of their Bloch vectors as follows:
\be \label{eq:pure-st-dist-real}
D_{xz}^2\ler{\rho, \omega}=4-2\inner{\bb_\rho}{\bb_\omega}=2+\norm{\bb_\rho-\bb_\omega}^2 \qquad \ler{\rho, \omega \in \cP_1^\R\ler{\cH}}.
\ee
Therefore, any map $\Phi: \, \cS\ler{\cH} \rightarrow \cS\ler{\cH}$ satisfying 
$$
D_{xz}\ler{\Phi\ler{\rho}, \Phi\ler{\omega}}=D_{xz}\ler{\rho, \omega} \qquad \ler{\rho, \omega \in \cS\ler{\cH}}
$$
acts on real symmetric pure states as an isometry of the circle $\bS^1$ representing $\cP_1^{\R}\ler{\cH}$ in the Bloch sphere model. That is, there exists an $O \in \bO(2)$ such that \be \label{eq:isom-O2}
\bb_{\Phi\ler{\rho}}=O\ler{\bb_\rho} \qquad \ler{\rho \in \cP_1^{\R}\ler{\cH}}.
\ee
However, we have seen before that the elements of $\bO(2)$ identified with the elements of $\cG$ are Wasserstein isometries. Therefore, for any isometry $\Phi$ there exists a unique $\psi \in \cG$ such that $\xi:=\psi^{-1} \circ \Phi$ is a Wasserstein isometry that leaves every real symmetric pure state fixed. In particular, $\omega_1=\fel\ler{I_{\cH}+\sigma_1}$ and $\omega_3=\fel\ler{I_{\cH}+\sigma_3}$ are fixed by $\xi,$ and hence \eqref{eq:triv-coupl-cost-real} tells us that for $\rho \in \cS\ler{\cH}$ given as in \eqref{eq:rho-def} we have
\be \label{eq:}
D^2_{xz}\ler{\xi\ler{\rho},\omega_1}=4-2x=D^2_{xz}\ler{\rho,\omega_1} \text{ and } D^2_{xz}\ler{\xi\ler{\rho},\omega_3}=4-2z=D^2_{xz}\ler{\rho,\omega_3}.
\ee
Consequently, $\xi$ cannot change the \lq\lq x" and \lq\lq z" coordinate of the Bloch vector of a state $\rho.$ In other words, $\xi$ leaves the line segment
\be
\ell_{(x,z)} := \lecurl{\fel\ler{I_{\cH} + x \sigma_1 + y \sigma_2 + z \sigma_3}\colon y^2\leq 1-x^2-z^2}
\ee
invariant, that is, $\xi\ler{\ell_{(x,z)}} \subseteq \ell_{(x,z)}.$
\par
Moreover, the \lq\lq y" coordinate of the Bloch vector of a state is also quite fixed: it is either left invariant, or mapped to its negative.
\par
To show this, we shall utilise the result of De Palma and Trevisan which is an explicit formula for the self-distances of states \cite[Corollary 1]{DPT}. It tells us that for any quadratic cost operator $C$ we have
\be \label{eq:self-dist}
D_C^2\ler{\rho, \rho}=\tr_{\hohc} \ler{C \, \Ket{\sqrt{\rho}} \Bra{\sqrt{\rho}}}
\ee
where $\Ket{\sqrt{\rho}} \Bra{\sqrt{\rho}} \in \cS\ler{\cH \otimes \cH^*}$ is the canonical purification of the state $\rho \in \cS\ler{\cH}$ -- see \cite{Holevo}.
\par
 Notice that the eigenvalues of $C_{xz}$ are $8,4,4,0$, and that its spectral decomposition is
$$
C_{xz} = 2\Ket{\sigma_1}\Bra{\sigma_1} +4\Ket{\sigma_2}\Bra{\sigma_2} + 2\Ket{\sigma_3}\Bra{\sigma_3}.
$$
Consider an arbitrary non-tracial state $\rho$ written in its spectral decomposition
\be \label{eq:rho-spectral}
\rho = \fel\ler{I_{\cH}+ \bb_\rho\cdot\Sig} = \lambda\cdot\fel\ler{I_{\cH} + \frac{\bb_\rho}{\norm{\bb_\rho}}\cdot\Sig} + (1-\lambda)\cdot\fel\ler{I_{\cH} - \frac{\bb_\rho}{\norm{\bb_\rho}}\cdot\Sig}
\ee
where $\Sig$ is the vector containing the Pauli matrices, $\norm{\bb_\rho} = (2\lambda-1), \, \lambda\in\left(\fel,1\right]$ and hence $\lambda = \fel\ler{1+\norm{\bb_\rho}}.$
Therefore
\begin{equation}\label{eq:sqrtrho-spectral}
\sqrt{\rho} = \sqrt{\lambda}\cdot\fel\ler{I_{\cH} + \frac{\bb_\rho}{\norm{\bb_\rho}}\cdot\Sig} + \sqrt{1-\lambda}\cdot\fel\ler{I_{\cH} - \frac{\bb_\rho}{\norm{\bb_\rho}}\cdot\Sig},
\end{equation}
and
\begin{align*}
    &D_{xz}^2(\rho,\rho) = \langle\langle\sqrt{\rho}||C_{xz}||\sqrt{\rho}\rangle\rangle \\
    &= \left\langle\left\langle \frac{\sqrt{\lambda}+\sqrt{1-\lambda}}{2}I_{\cH} + \frac{\sqrt{\lambda}-\sqrt{1-\lambda}}{2}(x,y,z)\cdot\Sig \right|\left| \frac{\sqrt{\lambda}-\sqrt{1-\lambda}}{2}(4x,8y,4z)\cdot\Sig \right\rangle\right\rangle \\
    &= \ler{\sqrt{\lambda}-\sqrt{1-\lambda}}^2\left\langle\left\langle (x,y,z)\cdot\Sig \right|\left| (x,2y,z)\cdot\Sig \right\rangle\right\rangle
    = \ler{1-2\sqrt{\lambda}\sqrt{1-\lambda}} \ler{2 + 2 y^2} \\
    &=2\ler{1-\sqrt{1-\norm{\bb_\rho}^2}} \ler{1 + \frac{y^2}{\norm{\bb_\rho}^2}}.
\end{align*}
For fixed $x$ and $z,$ this expression is strictly monotone increasing in $y^2$, and hence the preserver equation $D_{xz}^2\ler{\xi\ler{\rho},\xi\ler{\rho}}=D_{xz}^2\ler{\rho,\rho}$ tells us that the second element of the Bloch vector of $\xi\ler{\rho}$ is either $y$ or $-y.$
This means that $\xi \in \cF_{\lecurl{-1,1}}^{\ler{\cS\ler{\cH} \setminus \cS^\R\ler{\cH}}},$ and hence 
$$
\Phi=\psi \circ \xi \in \bO(2) \ltimes_{\varphi_2}  \cF_{\lecurl{-1,1}}^{\ler{\cS\ler{\cH} \setminus \cS^\R\ler{\cH}}}
$$
as desired.
\end{proof}

\section{Numerics on the clock and shift case}\label{s: numerics}

We performed some numerical test using Wolfram Mathematica \cite{WM} to study Theorem \ref{thm:clock-shift} where we obtained a lower and an upper bound for the isometry semigroup, see \eqref{eq:bounds}. The Mathematica notebook along with its pdf image is available online, see \cite{VW}.
\par
These numerical tests suggest that the truth in Theorem \ref{thm:clock-shift} is the \emph{lower bound.} That is,
\be \label{eq:num-truth}
\isom{\cW_2^{(xz)}\ler{\cS\ler{\cH}}}=\ler{\bO(2) \times \cC_2} \ltimes_{\varphi_1} \cF_{\lecurl{-1,1}}^{\ler{\cP_1\ler{\cH} \setminus \cP_1^\R\ler{\cH}}}.
\ee
The meaning of \eqref{eq:num-truth} is that Wasserstein isometries fixing real symmetric pure states behave \emph{uniformly} on mixed states: they either send all mixed states to their conjugate, or they leave all mixed states fixed. This is in striking contrast with the behaviour of isometries on pure states, where the conjugate can be taken or omitted independently, see \eqref{eq:weird-map}.
\par
The argument supporting this uniform property of the isometries reads as follows. Assume that an isometry $\xi$ leaves every element of $\cP_1^\R\ler{\cH}$ and hence $\cS^\R\ler{\cH}$ fixed. That is, we factor out by $\cG \cong \bO(2).$ Now, for every state $\rho \in \cS\ler{\cH}$ we have either $\xi\ler{\rho}=\rho$ or $\xi\ler{\rho}=\overline{\rho}.$
\par
We choose the distinguished mixed state
\be \label{eq:dist-mixed}
\eta:=\fel\ler{I_{\cH}+\fel \sigma_2}
\ee
and show numerical evidences that suggest that if $\xi$ sends $\eta$ to its conjugate, then it sends all \emph{mixed states} to their conjugates. And vice versa: if $\eta$ is left invariant then so are all the mixed states.
Note that there seems to be a crucial difference between mixed and pure states: as \eqref{eq:weird-map} shows, the \emph{pure states} may be left invariant or sent to their conjugates completely independently of the action of $\xi$ on other states.
\par
The following numerical results suggest that
\be \label{eq:num-ineq}
D_{xz}\ler{\rho, \eta}< D_{xz}\ler{\overline{\rho}, \eta}
\ee
for any $\rho \in \cS\ler{\cH} \setminus \cP_1\ler{\cH}$ with $\lesq{\bb_\rho}_2>0,$ that is, with positive \lq\lq y" coordinate in the Bloch model.
\par
In the experiment shown in Figure \ref{fig:fel} we fix the $\sigma_2$ component of
$$
\rho=\fel\ler{I_{\cH}+x \sigma_1+y \sigma_2+z \sigma_3}
$$
as $y=1/2$ (yellow) and $y=-1/2$ (blue), and we let both $x$ and $z$ run from $-\sqrt{3/8}$ to $\sqrt{3/8}.$
\begin{figure}[H]
    \centering
    \includegraphics[width=3.1in]{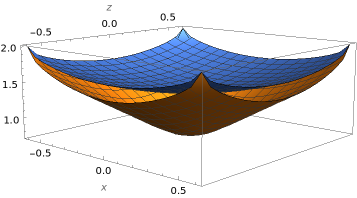}
    \caption{$y=\pm 1/2, \, -\sqrt{3/8}\leq x,z \leq \sqrt{3/8}$}
    \label{fig:fel}
\end{figure}
It is clear from the picture that
\be \label{eq:num1}
D_{xz} \ler{\fel\ler{I_{\cH}+x \sigma_1+\fel \sigma_2+z \sigma_3},\eta}
<
D_{xz} \ler{\fel\ler{I_{\cH}+x \sigma_1-\fel \sigma_2+z \sigma_3},\eta}
\ee
if $x^2+z^2<3/4$ (mixed $\rho$) and the two sides of \eqref{eq:num1} coincide for $x^2+z^2=3/4$ (pure $\rho$).
\par
In the second experiment we increased $\abs{y}$ to $4/5.$ In this case the difference between $D_{xz}\ler{\rho, \eta}$ and $D_{xz}\ler{\overline{\rho}, \eta}$ is even more visible, see Figure \ref{fig:negyotod}.
\begin{figure}[H]
    \centering
    \includegraphics[width=3.6in]{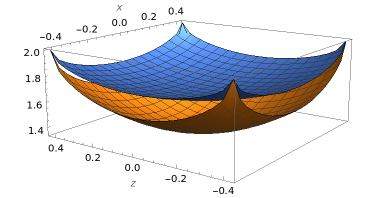}
    \caption{$y=\pm 4/5, \, -\sqrt{9/50}\leq x,z \leq \sqrt{9/50}$}
    \label{fig:negyotod}
\end{figure}

A natural guess is that if $\abs{y}$ is small, then so is the difference between the two sides of \eqref{eq:num-ineq}. This is indeed the case: the difference is so small that it cannot be seen well on $2$-dimensional plot. But a $1$-dimensional section that we obtain by letting $x=z$ shows again that we have strict inequality for mixed states and equality for pure states --- see Figure \ref{fig:egykilenced}.

\begin{figure}[H]
    \centering
    \includegraphics[width=3.2in]{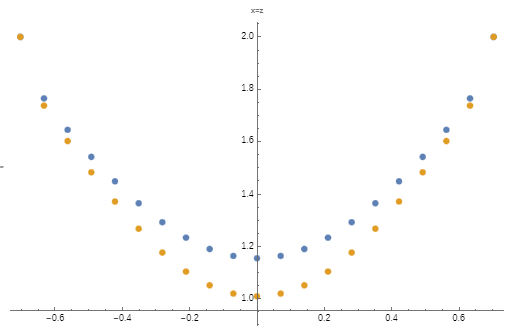}
    \caption{$y=\pm 1/9, \, -\sqrt{40/81}\leq x=z \leq \sqrt{40/81}$}
    \label{fig:egykilenced}
\end{figure}

\paragraph*{{\bf Acknowledgement}}
We are grateful to the anonymous reviewer for his/her valuable comments and insightful suggestions.

\end{document}